\definecolor{forestgreen}{rgb}{0.13, 0.55, 0.13}
\definecolor{ForestGreen}{rgb}{0.0333,0.4451,0.0333}
\definecolor{DarkRed}{rgb}{0.65,0,0}
\definecolor{Red}{rgb}{1,0,0}
\newtheorem{theorem}{Theorem}[section]
\newtheorem{lemma}[theorem]{Lemma}
\newtheorem{claim}[theorem]{Claim}
\newtheorem{definition}[theorem]{Definition}
\newcommand*\mc[1]{\mathcal{#1}}
\DeclarePairedDelimiter\abs{\lvert}{\rvert}
\DeclarePairedDelimiter\norm{\lVert}{\rVert}%
\DeclarePairedDelimiter{\ceil}{\lceil}{\rceil}
\let\oldabs\abs
\def\abs{\@ifstar{\oldabs}{\oldabs*}}
\let\oldnorm\norm
\def\norm{\@ifstar{\oldnorm}{\oldnorm*}}
\let\oldceil\ceil
\def\ceil{\@ifstar{\oldceil}{\oldceil*}}
\newcommand{\instance}{\emph{Steiner Path Aggregation Problem}}
\newif\ifcomments
\newcommand{\dnote}[1]{\todo[linecolor=red,backgroundcolor=yellow!25,bordercolor=red]{\textbf{DH:~}#1}}
\newcommand{\dnotein}[1]{\todo[linecolor=red,backgroundcolor=yellow!25,bordercolor=red,inline]{\textbf{DH:~}#1}}
\newcommand{\dqtodo}[1]{\todo[linecolor=red,backgroundcolor=yellow!25,bordercolor=red]{\textbf{DQC:~}#1}}
\newcommand{\dqtodoin}[1]{\todo[linecolor=red,backgroundcolor=yellow!25,bordercolor=red,inline]{\textbf{DQC:~}#1}}
\newcommand{\tcr}[1]{{\color{red} #1}}
\newcommand{\rnote}[1]{\tcr{Ravi: #1}}
\newcommand{\rtodo}[1]{\todo[linecolor=red,backgroundcolor=teal!25,bordercolor=red]{\textbf{Ravi:~}#1}}
\newcommand{\dnote}[1]{}
\newcommand{\dnotein}[1]{}
\newcommand{\dqtodo}[1]{}
\newcommand{\dqtodoin}[1]{}
\newcommand{\rnote}[1]{}
\newcommand{\rtodo}[1]{}
\date{}
\title{The Steiner Path Aggregation Problem}
\author[1]{Da Qi Chen}
\author[2]{Daniel Hathcock\thanks{Corresponding author: \texttt{dhathcoc@andrew.cmu.edu}. Supported in part by NSF Graduate Research Fellowship grant DGE-2140739.}}
\author[3]{D Ellis Hershkowitz\thanks{Supported in part by NSF grant CCF-2403236.}}
\author[2]{R. Ravi\thanks{This material is based upon work supported in part by the Air Force Office of Scientific Research under award number FA9550-23-1-0031.}}
\affil[1]{University of Virginia, Charlottesville, VA}
\affil[2]{Carnegie Mellon University, Pittsburgh, PA}
\affil[3]{Brown University, Providence, RI}
\begin{document}

\maketitle

\begin{abstract}
In the \instance, our goal is to aggregate paths in a directed network into a single arborescence without significantly disrupting the paths. In particular, we are given a directed multigraph with colored arcs, a root, and $k$ terminals, each of which has a monochromatic path to the root. Our goal is to find an arborescence in which every terminal has a path to the root, and its path does not switch colors too many times. We give an efficient algorithm that finds such a solution with at most $2\log_{\nicefrac{4}{3}}k$ color switches. Up to constant factors this is the best possible universal bound, as there are graphs requiring at least $\log_2 k$ color switches. 
\end{abstract}

{
  \small	
  \textbf{\textit{Keywords---}} Network Design, Steiner, Path Aggregration, Heavy Path Decomposition
}

\section{Introduction}

In the \instance, we are given a directed multigraph $G = (V, A)$ with root $r \in V$ and a set of $k$ \emph{terminals} $R \subseteq V \setminus \{r\}$. The arcs are colored such that each of the terminals has a monochromatic path to the root, which we call its \emph{proposed path}. 
The goal is to find a subtree in which every terminal has a unique directed path to the root (not necessarily its proposed path), called a Steiner in-arborescence, and such that no terminal pays a high \emph{switching cost}. That is, the number of times each unique terminal-to-root path switches colors is low.

While a subgraph of switching cost 0 can be found by taking all arcs in $G$, such a solution may be expensive to construct (in terms of number of arcs) and does not provide a simple way of routing to the root from each terminal. This is why we seek to aggregate the terminals' proposed paths by choosing a Steiner in-arborescence: it is cheap in that it is a minimal arc set in which every terminal has a path to the root, and the terminal-to-root paths are unique, enabling simple routing. In this work, we study how to construct such an in-arborescence while bounding the switching cost.


Observe that if $R = V \setminus \{r\}$ and the underlying graph is already an arborescence after removing parallel arcs, the problem can be solved using the \emph{heavy path decomposition} of Sleator and Tarjan~\cite{SleatorT81}. This is a decomposition of the arcs of the underlying arborescence into arc-disjoint ``heavy paths'' such that every node-to-root path intersects at most $\lceil \log_2 |V| \rceil$ of these heavy paths. An arc is called heavy if more than half of the descendants of its parent node descend from the arc, and otherwise is called light. A heavy path is then defined as a maximal contiguous path of heavy arcs along with a single parent light arc. In particular, every arc lies in exactly one of the heavy paths. To solve the \instance, we may compute this decomposition, then for each heavy path we select the corresponding colored arcs from the proposed path of the lowest node in the heavy path. See \Cref{fig:heavy-path}. With this choice of subgraph, the switching cost of a terminal is precisely the number of heavy paths it uses on its path to the root, so this ensures a switching cost of at most $\lceil \log_2 |V| \rceil$.

\begin{figure}[h]
    \centering
    \includegraphics[scale=0.75]{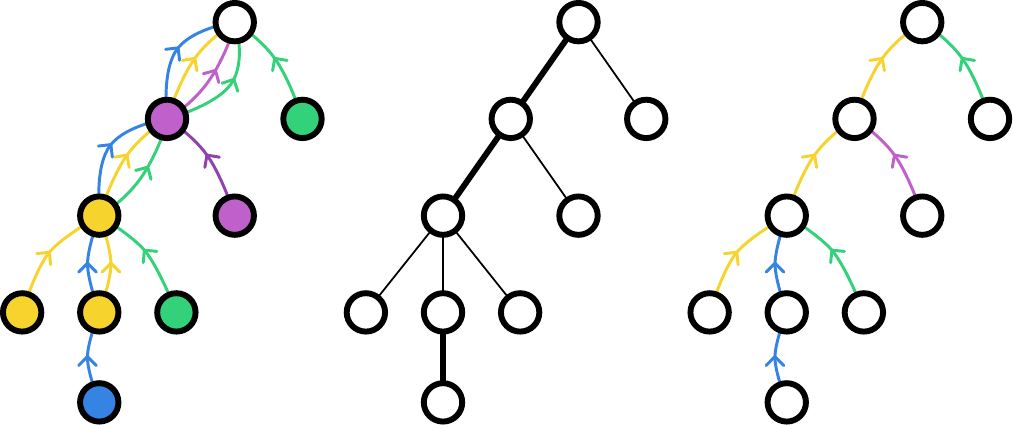}
    \caption{The input instance is pictured on the left. Observe that each node has a monochromatic ``preferred path'' to the root (indicated here by the color of the node). The underlying network after removing parallel arcs is pictured in the middle, with the heavy edges shown in bold. Each heavy path consists of a path of heavy edges and the parent light edge (if any). The right shows the subgraph chosen corresponding to the heavy path decomposition. Observe the switching cost is just 1, but a more naive choice could have incurred a switching cost of up to 3.}
    \label{fig:heavy-path}
\end{figure}

Even in the special case of trees, one cannot guarantee a switching cost less than $\log_2(\nicefrac{|V|}{2})$. For example, if the underlying graph is a complete binary tree with terminals $R = V \setminus \{r\}$, and all the proposed paths are made arc disjoint by adding parallel arcs and coloring them distinctly, then any solution will leave some node with switching cost at least $\log_2 (\nicefrac{|V|}{2})$ (this is easily seen by induction on the height of the tree). Thus, a natural question arises. 
\begin{quote}\centering
    \textit{Can we extend the $O(\log |V|)$ upper bound from trees to general graphs,\\ matching the simple $\Omega(\log |V|)$ lower bound on trees?}
\end{quote}


\paragraph{Our Contributions} We answer the above question affirmatively, giving an algorithm to find a low switching cost network in the general setting, whose guarantee matches (up to a constant) the $\log_2 |V|$ given by the heavy path decomposition for arborescences.

\begin{restatable}{theorem}{pathAggregation}\label{thm:path-agg}
    Let $G$ be an $n$-vertex directed multigraph with a root $r$ and $k$ terminals $R \subseteq V \setminus \{r\}$. Furthermore, suppose the arcs are colored such that every terminal $v \in R$ has a monochromatic dipath $P_v$ from $v$ to $r$. Then $G$ has an $r$-arborescence $T$ containing $R$ such that every terminal-to-root path in $T$ switches colors at most $2\log_{\nicefrac{4}{3}} k$ times.
\end{restatable}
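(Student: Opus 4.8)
The plan is to prove a stronger recursive statement: given a vertex $w$, a set $S$ of terminals each having a monochromatic dipath to $w$, and a subgraph of $G$ containing these paths, there is an in-arborescence rooted at $w$ and spanning $S$ in which every terminal-to-$w$ path switches colors at most $2\log_{\nicefrac{4}{3}}|S|$ times. Taking $w = r$ and $S = R$ recovers \Cref{thm:path-agg}. The base case $|S| \le 1$ is immediate: route the single terminal along its proposed path, which is monochromatic and hence incurs switching cost $0$.

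For the inductive step I would mimic the heavy path decomposition, but using monochromatic \emph{spine} paths in place of tree heavy paths. Concretely, select a monochromatic dipath $Q$ ending at $w$ (morally a prefix of some terminal's proposed path) to serve as the spine, commit all of $Q$'s arcs to the arborescence, and for every other terminal $v$ route it along its own (monochromatic) proposed path $P_v$ until the first vertex $\mu(v)$ it shares with $Q$, and thereafter along $Q$ to $w$; such a terminal pays at most one switch, at $\mu(v)$, on top of whatever it accrues strictly before $\mu(v)$. The terminals then partition into groups $S_u = \{v : \mu(v) = u\}$, one per vertex $u \in Q$, and each group is recursively handled as a subproblem rooted at $u$, with the monochromatic path for $v \in S_u$ being the portion of $P_v$ from $v$ to $u$ (which is internally disjoint from $Q$). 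Provided every group satisfies $|S_u| \le \nicefrac{3}{4}\,|S|$, the recursion has depth at most $\log_{\nicefrac{4}{3}}|S|$, and charging at most $2$ switches per level — one for joining a spine, one of slack — yields the claimed bound.

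The crux, which I expect to be the main obstacle, is twofold, and the two parts interact. First, a single proposed path need not induce a balanced partition: as with a perfect matching of pairwise-crossing paths, a fixed $P_{v^*}$ may leave nearly all terminals in one group $S_u$. Overcoming this seems to require either allowing the spine to be a short concatenation of monochromatic segments, or choosing the spine by a centroid/``heaviest path'' argument over the union $\bigcup_{v} P_v$ so that the largest light group is provably at most a $\nicefrac{3}{4}$ fraction — presumably the source of the base $\nicefrac{4}{3}$ (rather than $2$) and of the constant-factor loss relative to the pure heavy path decomposition. Second, the recursive subproblems rooted at different vertices of $Q$ must yield \emph{mutually consistent} sub-arborescences (each used vertex receives a single out-arc); since the stubs $P_v[v \ldots \mu(v)]$ aimed at different spine vertices can cross, one must either restrict each subproblem to a vertex-disjoint portion of the graph while still preserving a monochromatic path from every terminal to its sub-root, or reroute a crossing terminal onto the path it crosses at the cost of an extra switch — which is what accounts for the factor $2$. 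Making these two mechanisms coexist — a balanced, nearly-monochromatic spine together with a clean vertex partition for the recursion — is the technical heart of the argument.
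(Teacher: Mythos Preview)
Your proposal is not a proof but an outline with the two load-bearing steps explicitly left open, and those gaps are real. You need (i) a spine $Q$ whose induced groups $S_u$ all satisfy $|S_u|\le \tfrac{3}{4}|S|$, and (ii) a way to solve the subproblems rooted at the different spine vertices so that their outputs are mutually consistent (every vertex gets a single out-arc). You offer no mechanism for (i): a ``centroid/heaviest path'' argument over $\bigcup_v P_v$ is suggestive, but the union of the proposed paths is an arbitrary DAG-with-parallel-arcs, not a tree, and there is no evident weight function on vertices of a candidate spine that is monotone along the spine and splits the terminals; indeed your own pairwise-crossing example shows a single proposed path can fail badly, and you give no construction of a short concatenation that provably works. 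For (ii), ``reroute a crossing terminal onto the path it crosses at the cost of one extra switch'' can cascade: the rerouted terminal may now cross yet another stub aimed at a different spine vertex, and so on, exactly the circular-dependency phenomenon in \Cref{fig:greedy-cycle}. Without a global argument bounding these cascades, the factor~$2$ per level is unjustified.

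The paper sidesteps both issues by abandoning the top-down spine picture entirely. It runs an \emph{iterative} procedure that grows, in each round, \emph{maximal vertex-disjoint prefixes} of all currently active terminals' proposed paths simultaneously; it then forms a dependency graph on the active set with an edge $uv$ whenever extending $u$'s prefix by one arc would hit $v$'s prefix. Because each vertex has out-degree at most~$1$ in this graph, every component has at most as many edges as vertices and is therefore $3$-colorable; the largest color class (at least a third of the actives, minus possibly one that already reached~$r$) is made inactive by extending each of its prefixes one arc to join another prefix. This gives the $\tfrac{3}{4}$ shrinkage without ever selecting a spine or proving a balance lemma, and the consistency of the partial solution is maintained by a careful branching invariant (deleting any old arc that leaves a newly added prefix). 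Each round adds at most two color switches per terminal --- one where its old path is cut by a new prefix, one where a joined prefix meets its target --- yielding the $2\log_{4/3}k$ bound. The key idea you are missing is this simultaneous-growth-plus-$3$-coloring step; your recursive framework does not obviously accommodate it.
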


Observe that our result gives a decomposition theorem for trees similar to the heavy path decomposition (if we relax the $\log_2 n$ to $O(\log n)$). For a tree $G$, construct the directed multigraph $G'$ on the same vertex set by picking a unique color for each non-root vertex $v$ and adding a dipath $P_v$ of that color corresponding to the node-to-root path of $v$ in $G$. Now applying \Cref{thm:path-agg} with terminals $R = V \setminus \{r\}$, the arborescence $T$ gives a coloring of $G$, and contiguous arcs of the same color form paths. In particular, every node-to-root path intersects at most $2\log_{\nicefrac{4}{3}} n$ of these paths, as in the heavy path decomposition. 

Thus, our result can be seen as extending the power of heavy path decompositions from trees---which has had a wide range of uses in both theoretical and practical settings---to general graphs. 
For example, Sleator and Tarjan originally used the heavy path decomposition to analyze the amortized cost of operations on their link/cut tree data structure~\cite{SleatorT81}. Chekuri, Hajiaghayi, Kortsarz, and Salavatipour use the heavy path decomposition as a step in showing the existence of junction trees for the non-uniform Buy-at-Bulk problem~\cite{CHKS10}. Naor, Umboh, and Williamson use the heavy path decomposition to give a tight $O(\log n)$-competitive algorithm for the Online Weighted Tree Augmentation problem~\cite{NaorUW19}. As another example, Haeupler, Hershkowitz, and Wajc used it to generalize broadcast algorithms from paths to trees \cite{haeupler2021near}.

\paragraph{Problem Difficulty} While the algorithm and analysis for the heavy path decomposition is quite simple, the same approach does not work in our more general setting. The algorithm for heavy path decomposition uses the size of subtrees rooted at a node to decide how to split the tree into paths. No such subtree structure exists in a general directed multigraph, so it is not clear how to proceed. Attempting to build the network iteratively from the terminals can result in circular dependencies, as in \Cref{fig:greedy-cycle}. These complex dependencies yield more interesting solutions than in the arborescence case. For example, a low switching cost network may include multiple disjoint segments of a single terminal's preferred path. 
\begin{figure}[h]
    \centering
    \includegraphics[scale=0.7]{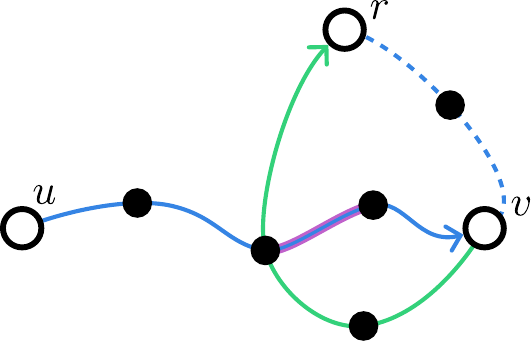}
    \caption{An example of the circular dependencies that may arise while greedily choosing arcs from the proposed paths of two nodes $u$ and $v$. An algorithm might select the arcs of some prefix of terminal $u$'s proposed path (blue). If the prefix ends at some other terminal $v$, then we may switch over to $v$'s proposed path (green). But now $v$'s proposed path could intersect the already chosen prefix of $u$'s proposed path, creating a cycle. This necessitates either dropping some arcs (i.e. the one highlighted in purple) or switching to a different proposed path before they intersect, both incurring some additional switching cost.}
    \label{fig:greedy-cycle}
\end{figure}

\paragraph{Algorithm Intuition} At a high level, our algorithm maintains a set of active nodes, initialized to the set of all terminals, and proceeds in rounds. In each round, the active nodes iteratively extend the solution by adding arcs along their proposed paths, maintaining vertex-disjointedness from the arcs added by other active nodes. Once no more arcs can be added in this way, we designate some active nodes to become inactive by allowing them to add one more arc from their proposed path, thus joining up with the proposed path of some other active node. In each round, a constant fraction of the active nodes becomes inactive, and any one node accumulates only a constant extra switching cost. So the total switching in the end is $O(\log k)$. 

Care must be taken to avoid circular dependencies, and to ensure the solution remains an arborescence. In the process of an active node $v$ adding arcs, it may intersect arcs in the solution added from a previous round (from a node which is now inactive), ``cutting through'' this path in the solution. One arc must be removed from the cut path to maintain that no node has out-degree more than 1, however this may change the component that certain nodes belong to in the partial solution. 


This approach is similar at a high level to the Matching Based Augmentation (MBA) framework for connectivity problems (see surveys \cite{Ravi06} and \cite[\S 7]{GUPTA11}), in which a partial solution of connected components is maintained. In each step, a low-cost augmentation is made to merge the components, reducing their number by, say, a constant factor resulting in a logarithmic approximation guarantee. The main distinction between MBA and our approach is that the circular dependencies mentioned above may cause the components of the partial solution to change unpredictably. A piece of one component may leave that component and join another at any point in the algorithm. Moreover, while MBA typically bounds the cost of each augmentation in terms of the optimal cost OPT, we give an absolute bound on the additional switching cost incurred by each iteration, independent of the instance.

\section{Preliminaries and Notation}

In this section we include some definitions and useful lemmas for the remainder of the paper. 

Given a directed multigraph $G = (V, A)$, an \emph{arborescence} (or $r$-arborescence) $T$ is a subgraph for which every node in $T$ has a unique directed path to a root $r$. A \emph{branching} in $G$ is a collection of vertex-disjoint arborescences (with different roots). 

\begin{definition}[$\alpha$-Switching Arborescence]
If the arcs of the graph $G$ are colored and a root $r$ is specified along with a terminal set $R$, then we call an $r$-arborescence \emph{$\alpha$-switching} if it contains $R$ and for every terminal $v \in R$ the path from $v$ to $r$ switches colors at most $\alpha$ times. 
\end{definition}

\begin{definition}[\instance]
An instance of the \instance\ is a directed multigraph $G$ with root $r$ and terminals $R \subseteq V$ in which the arcs are colored so that every terminal has a monochromatic directed path to the root. 
\end{definition}

The following simple lemma will be useful in defining our algorithm. 
\begin{lemma}\label{lem:3-color}
Let $H$ be an undirected graph in which every connected component $S$ has at most $\abs{V(S)}$ edges. Then $H$ is 3-colorable, and a 3-coloring can be found in polynomial time. 
\end{lemma}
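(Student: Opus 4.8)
The key structural fact is that a graph where every connected component has at most as many edges as vertices is very sparse: each component is either a tree or a tree plus exactly one extra edge (a "unicyclic" graph), or something with even fewer edges. So I would first reduce to analyzing a single connected component $S$, since a coloring of each component independently yields a coloring of $H$. Within a component with $|V(S)|$ vertices and at most $|V(S)|$ edges, the cyclomatic number is at most $1$: either $S$ is a tree (if it has $|V(S)|-1$ edges) or $S$ has exactly one cycle (if it has exactly $|V(S)|$ edges).

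\textbf{Handling the two cases.} If $S$ is a tree, it is bipartite, hence $2$-colorable, and certainly $3$-colorable; a $2$-coloring is found in linear time by BFS/DFS. If $S$ has exactly one cycle $C$, I would pick an edge $e = \{x,y\}$ of $C$ and remove it, leaving a spanning tree $S - e$ of $S$. Two-color $S - e$ greedily (BFS from any root). Now the only possibly-monochromatic edge is $e$ itself; if $x$ and $y$ received different colors, we are done with a $2$-coloring, and otherwise recolor $x$ with the third color. Changing the color of $x$ only affects edges incident to $x$, and the new color is distinct from the (shared) old color of $x$ and $y$ and from all neighbors of $x$ (which had the opposite color in the $2$-coloring), so the result is a proper $3$-coloring. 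Every step here is polynomial time.

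\textbf{Main obstacle.} The only subtlety is verifying that the cyclomatic number is indeed at most $1$ per component — i.e., translating the hypothesis "$|E(S)| \le |V(S)|$" into "at most one independent cycle" — and making sure the recoloring of the single endpoint $x$ does not create a new monochromatic edge. The first is just the standard fact that a connected graph on $m$ vertices with $m-1+c$ edges has cyclomatic number $c$; the second follows because in the $2$-coloring every neighbor of $x$ in $S-e$ has the color opposite to $x$'s original color, and $y$ shares $x$'s original color, so the fresh third color conflicts with none of them. I would also note for completeness that components with fewer edges are subgraphs of the unicyclic case and hence handled a fortiori, so $3$ colors always suffice and the whole procedure runs in polynomial (indeed linear) time.
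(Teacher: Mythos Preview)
Your proof is correct and follows essentially the same approach as the paper: both arguments rest on the observation that each component has at most one cycle and then obtain the $3$-coloring by a simple BFS-based greedy procedure. The paper's proof is a one-liner (``each component contains at most a single cycle, so the coloring can be found greedily via a breadth-first search''), and your version simply spells out the tree/unicyclic case analysis and the single-vertex recoloring step in more detail.
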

\begin{proof}
Each component contains at most a single cycle, so the coloring can be found greedily (e.g. via a breadth-first search). 
\end{proof}

\section{Path Aggregation Algorithm}

We now describe the algorithm used to prove our main theorem, restated here in different terms.

\begin{theorem}\label{thm:main-restated}
Any instance $G$ of the \instance\ with $k$ terminals has a poly-time computable $2\log_{\nicefrac{4}{3}}k$-switching arborescence. 
\end{theorem}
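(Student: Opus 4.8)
The plan is to prove the equivalent statement \Cref{thm:main-restated} by building the arborescence incrementally. I maintain a \emph{partial solution} $F$, which is always a branching whose every arc is drawn from some terminal's proposed path, together with a set $X$ of \emph{active} vertices consisting of one terminal representative from each component of $F$ that does not yet contain $r$ (every non-$r$ component will keep containing a terminal); initially $F=\emptyset$ and $X=R$. The algorithm runs in rounds, and I will show each round is polynomial-time and arranges that (i) $|X|$ shrinks by a factor bounded away from $1$ and (ii) the color switches on any terminal's eventual root path grow by only $O(1)$. Once $X=\emptyset$, every component of $F$ contains $r$, so $F$ is a single $r$-arborescence containing all of $R$; since there are $O(\log k)$ rounds its switching cost is $O(\log k)$, and carrying the constants through gives the stated $2\log_{\nicefrac43}k$.

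\emph{Extension phase.} In a round, each active representative repeatedly appends the next arc of the proposed path it is currently tracing to its component, keeping the arcs contributed this round by distinct representatives internally vertex-disjoint. Tracing a monochromatic path contributes no switch. The subtle case is when the next arc of an active vertex $v$ would leave a vertex $z$ that already has an out-arc because $z$ lies on a path $Q$ built in an earlier round: adding $v$'s arc would give $z$ out-degree two, so I delete $z$'s out-arc in $Q$. This ``cuts through'' $Q$: the subtree of $F$ hanging below $z$ detaches from its old component and simultaneously reattaches to $v$'s component through the newly added arc. I must check this keeps $F$ a branching, leaves the number of non-$r$ components unchanged (the old component splits into an $r$-side and a side absorbed into $v$'s already-active component, so no new active component appears), and charges each re-routed vertex exactly one extra switch, incurred at $z$ and only in this round.

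\emph{Merge phase.} When no active vertex can extend, every still-active $v$ is blocked; if it was blocked by (or reached) an $r$-containing component it is declared inactive for free, and otherwise let $f(v)$ be the active representative whose component blocks $v$'s next arc. The map $f$ defines a functional graph $H$ on the remaining active set, so each component of $H$ has at most as many edges as vertices, and \Cref{lem:3-color} partitions it into three independent sets; I make the largest one, $C$, inactive by letting every $v\in C$ append its single blocked arc and attach to $f(v)$'s component. Because $C$ is independent, $f(v)\notin C$ for all $v\in C$, so all of these merges point out of $C$ and can be carried out at once without ever forming a cycle of merges --- this is precisely what defeats the circular dependency of \Cref{fig:greedy-cycle} --- and each costs its representative one switch. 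Since $|C|\ge |X|/3$, the active set shrinks by at least a quarter each round, so the algorithm stops after at most $\log_{\nicefrac43}k$ rounds.

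The remaining work is the accounting. For a terminal $t$, every color change on its final root path sits at a junction created by some extension-phase cut or some merge, and one shows these junctions are encountered in rounds that are (weakly) increasing as one walks from $t$ toward $r$, with at most a constant number (in fact two: at most one re-parenting and the ensuing merge of the component it landed in) arising from any single round; this yields switching cost $O(\log k)$, and the careful version pins the constant at $2\log_{\nicefrac43}k$. I expect the main obstacle to be exactly the instability illustrated in \Cref{fig:greedy-cycle}: the cut-and-reattach step can move vertices --- even ones already connected to $r$ --- between components at arbitrary times, so both the progress measure $|X|$ and the per-vertex switch charges must be shown robust under these moves, and the extension and merge phases must be arranged so that every operation is simultaneously legal (never out-degree two, never a merge cycle) and cheap (one new switch per affected vertex).
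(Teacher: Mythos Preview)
Your proposal is correct and follows essentially the same approach as the paper: maintain a branching with one active representative per component, in each round extend the active paths to maximal vertex-disjoint prefixes (cutting through and re-parenting old arcs as needed), then use a 3-coloring of the functional dependency graph (via \Cref{lem:3-color}) to pick a large independent set to merge, yielding a factor-$\tfrac{3}{4}$ shrinkage and at most two new switches per terminal per round. One small correction: when an extension cuts through an old path $Q$ at $z$, the component containing $Q$ need not be the $r$-component---it is the component of some still-active representative $w$, and the point is that $w$ (and its active path) always lies on the surviving side of the cut, so no new unrepresented component appears; for this you need disjointness of the extended prefixes to include the previous rounds' active paths $P_w^A$, not just the arcs added this round.
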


\paragraph{Algorithm Overview} The algorithm maintains a partial solution as a branching $B$. Initially $B$ is empty, and the algorithm terminates when the terminals are all connected to $r$ in $B$. 

The algorithm also maintains a subset of the terminals called \textit{active nodes} $S$. In particular, each arborescence in $B$ that contains a terminal will contain a single active node representative. Each active node $v \in S$ maintains a prefix $P^A_v$ of its monochromatic path $P_v$, which we call its \textit{active path}. Initially, $S = R$, and the active paths are just the length-0 prefixes $P_v^A = \{v\}$. We will ensure that at every step of the algorithm, every terminal has a path in $B$ to the active path of its representative (i.e., the representative of its arborescence in $B$). 

See \Cref{alg:path-agg} for pseudo-code of the algorithm, and \Cref{fig:algorithm} for an example of one iteration. It proceeds iteratively by merging the arborescences in $B$:
\begin{enumerate}
    \item (\Cref{step:maxprefix}) First, it finds a set of \emph{maximal disjoint prefixes} for the active nodes. This is a set $\mc P$ of paths $P_v'$ for each active node $v \in S$ satisfying: (i) $P_v'$ is a prefix of $v$'s monochromatic path $P_v$; (ii) $P_v'$ extends $v$'s active path $P_v^A$; and (iii) the $P_v'$ are maximally vertex-disjoint, i.e. extending any $P_v'$ by one more arc would intersect some other $P_u'$. 
    
    This is done greedily: for each $v \in S$ (in arbitrary order), extend its active path $P_v^A$ along $P_v$ maximally while maintaining that it is vertex disjoint from all other $P_u^A$ and $P_u'$. 

    \item (\Cref{step:ext1,step:ext2,step:ext3}) Next, we merge arborescences by extending some of the paths $P_v'$ by a single arc, allowing them to ``join onto'' another path $P_u'$. 
    
    To select which paths to extend, we define an auxiliary \emph{dependency graph} $H$ with vertex set $V(H) = S$, and edge $uv \in H$ if extending $P_u'$ by one arc along $P_u$ intersects $P_v'$. If any $v \in S$ has $P_v' = P_v$ (i.e., it reaches the root), then $v$ is removed from $H$. Now, we find a 3-coloring of $H$, and let $S_H \subseteq S$ be one of the largest color classes. These are the nodes whose paths get extended. 

    \item (\Cref{step:update1,step:update2,step:update3,step:update4}) Finally, we update $B$ by adding all arcs from the paths in $\mc P$ (including the additional arc for each $v \in S_H$). We update the set of active nodes $S$ by removing $S_H$ from $S$, and for each still-active node $v \in S$, we update its active path $P_v^A \gets P_v'$. 

    We must also remove any arc from $B$ which leaves a vertex in any of the newly added paths from $\mc P$. The intuition of this step is that if a path $P_v'$ from $\mc P$ cuts through some path that was previously in $B$, the prefix of the cut path joins onto $P_v'$, while the suffix remains in $B$ (and will go on to join some other active path). 
\end{enumerate} 


\begin{algorithm}[h]	
    \caption{\texttt{Path Aggregation}}
    \label{alg:path-agg}
	\DontPrintSemicolon
	\SetKwInOut{Input}{input}\SetKwInOut{Output}{output}
	\Input{Digraph $G$ rooted at $r$ with $k$ terminals $R \subseteq V(G)$ and monochromatic paths $P_v$ for every $v\in R$.}
	\Output{$2\log_{\nicefrac{4}{3}} k$-switching arborescence $T$.}
    \BlankLine

    Initialize the representatives $S\gets R$ and the branching $B \gets \emptyset$.\;
    For each $v \in S$, initialize active paths $P_v^A \gets \{v\}$, the length-0 prefix of the path $P_v$.\;
    
    \While{there is a terminal without a path to $r$ in $B$}{
        Find the set $\mc P$ of \emph{maximal disjoint prefixes} $P_v'$ for each $v \in S$ greedily. \label{step:maxprefix}\;
        
        Let $H$ be the \emph{dependency graph} for $\mc P$.\label{step:ext1}\;
        
        Find a $3$-coloring of $H$, and let $S_H$ be one of the largest cardinality color classes. \label{step:ext2}\;
        
        For every $v\in S_H$, update $P_v'\in \mathcal{P}$ by extending $P_v'$ with one more arc along $P_v$. \label{step:ext3}\;
        
        Let $A_{\mc P}$ be the union of the set of arcs in the paths in $\mathcal{P}$.\label{step:update1}\;
        
        Let $B'$ be the arcs from $B$, minus those which exit any vertex in a path from $\mc{P}$. \label{step:update2}\;
        \textbf{Update:} $B \gets A_{\mc P} \cup B'$. \label{step:update3}\;
        \textbf{Update:} $S\gets S\setminus S_H$, and $P_v^A \gets P_v'$ for each $v \in S$.\label{step:update4}\;
    }
    
    \Return $B$.
\end{algorithm}

\begin{figure}[h]
\begin{tabular}{c@{\hskip 2cm}c}
  \includegraphics[scale=0.7]{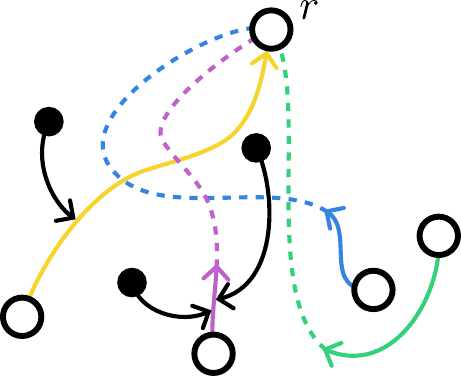} &   \includegraphics[scale=0.7]{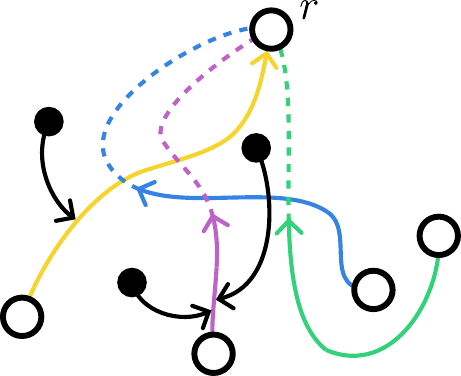} \\
(a) & (b) \\[0.5cm]
 \includegraphics[scale=0.7]{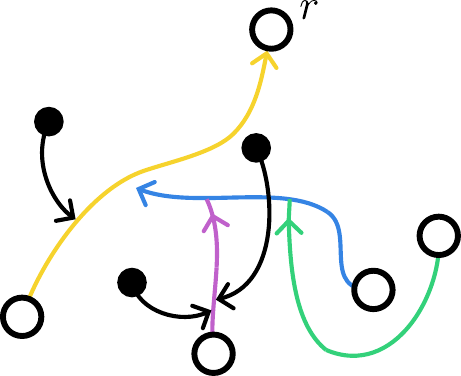} &   \includegraphics[scale=0.7]{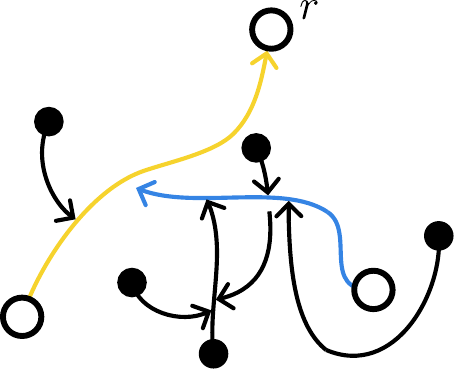} \\
(c) & (d)
\end{tabular}
\caption{An example iteration of \Cref{alg:path-agg}. (a) shows the state of the algorithm at the beginning of the iteration. Some nodes are inactive, shown in black, while white (non-root) nodes are active. Their active path is a solid color, and is a prefix of their dashed monochromatic path. (b) shows the \emph{maximal disjoint prefixes} $P'_v$ extending the active paths after \Cref{step:maxprefix}. (c) shows the result of \Cref{step:ext3}, extending the paths of those nodes in $S_H$ by a single arc (purple and green here). Finally (d) gives the result of the iteration. Note the nodes of $S_H$ became inactive, and also a previously inactive path was cut through by the new blue active path, so an arc was removed.}
\label{fig:algorithm}
\end{figure}

\begin{proof}[Proof of \Cref{thm:main-restated}]
To show the validity of the algorithm and bound the total number of rounds, we first prove the following.

\begin{claim}\label{claim}
At the end of the $i$-th iteration, the following conditions hold:
\begin{enumerate}
    \item $B$ induces a union of vertex-disjoint arborescences. \label{c1}
    \item Each arborescence $T'$ induced by the edges of $B$ is either a singleton non-terminal, or otherwise it contains exactly one active representative $v \in S$ (i.e. $|T'\cap S|=1$) and its active path $P_v^A$. \label{c2}
    \item If $v$ is the representative of $T'$, then any terminal $u\in V(T')$ can reach $V(P_v^A)$ in $T'$ with at most $2i$ color switches. \label{c3}
    \item The size of $S$ is reduced by at least $\frac{3}{4}$ after applying \Cref{step:update4}.\label{c4}
\end{enumerate}
\end{claim}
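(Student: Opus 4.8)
The natural approach is an induction on the iteration index $i$, establishing all four parts of \Cref{claim} simultaneously. The base case $i=0$ is immediate: $B=\emptyset$, each terminal is its own singleton arborescence and its own active representative with a length-$0$ active path, and $0\le 2\cdot 0$ color switches suffice. For the inductive step I would first isolate the combinatorial structure created in iteration $i$. Let $Z$ be the set of vertices lying on some path of $\mc P$ \emph{after} the single-arc extensions of \Cref{step:ext3}, and record three facts used repeatedly: (a) every arc of $A_{\mc P}$ has both endpoints in $Z$; (b) by \Cref{step:update2}, the surviving old arcs $B'$ contain no arc whose tail lies in $Z$; and (c) each connected component of $A_{\mc P}$ is a single ``core'' path $P_u'$ with $u\notin S_H$, together with pairwise vertex-disjoint pendant paths $P_v'$ ($v\in S_H$), each joined to the core by exactly one extension arc --- this holds because $S_H$ is an independent set of the dependency graph $H$ (so no extension arc from an $S_H$-path lands on another $S_H$-path) and the prefixes of $\mc P$ are pairwise vertex-disjoint by construction. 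Facts (a) and (b) say that $Z$ is a one-way trap: an arc of $A_{\mc P}$ is never traversed from outside $Z$, and no old arc leaves $Z$.

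For \Cref{c1}, the out-degree bound is a short case analysis: a vertex of $Z$ loses every old out-arc in \Cref{step:update2} and gains at most one out-arc from $A_{\mc P}$ (disjoint prefixes, at most one extension arc per vertex), while a vertex outside $Z$ keeps its at-most-one old out-arc and gains none. Acyclicity is the more delicate point, and is precisely the circular dependency that \Cref{fig:greedy-cycle} warns about: by (a) and (b) any directed cycle in the new $B$ must lie entirely in $Z$ and use only arcs of $A_{\mc P}$, but by (c) the graph $A_{\mc P}$ is a forest, so no such cycle exists. Out-degree $\le 1$ together with acyclicity yields that $B$ is a disjoint union of in-arborescences.

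For \Cref{c2}, the key observation is that an old active path is never cut: if $w$ was a representative at the start of the iteration then $P_w^A$ is a prefix of $P_w'\in\mc P$, hence $P_w^A\subseteq Z$ and, belonging to the vertex-disjoint family $\mc P$, it meets no other path $P_v'$; so \Cref{step:update2} never deletes an arc internal to an old active path. Thus whenever a path $P_v'$ cuts an old arborescence $\hat T$ at a vertex $x$, the side of $\hat T$ containing its sink --- and in particular all of the representative's active path --- stays intact and keeps its representative (now with active path extended to $P_w'$), while the piece hanging below $x$ reattaches to $P_v'$'s component. The only way to lose representative status is to lie in $S_H$, in which case that path is extended by one arc onto a core path $P_u'$ with $u\notin S_H$ and the component merges into $u$'s; so following out-arcs, every vertex reaches a unique sink, namely the active-path endpoint of some surviving representative $u\notin S_H$, or the root. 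This gives the one-representative-per-arborescence statement, with the $r$-containing arborescence as the natural exception. For \Cref{c4}, each vertex of $H$ has at most one outgoing dependency arc (extending $P_v'$ by one arc reaches a single vertex, lying on at most one of the vertex-disjoint paths of $\mc P$), so every component of $H$ has at most as many edges as vertices; \Cref{lem:3-color} then $3$-colors $H$, so its largest color class satisfies $|S_H|\ge |V(H)|/3$, and since at most one active node can reach $r$ in a single iteration (two such would share $r$, violating disjointness of $\mc P$), $V(H)$ omits at most one vertex of $S$ --- enough to force the shrinkage of $|S|$ asserted in \Cref{c4} (and when $|S|=1$ the while-loop has either already halted or $S$ becomes empty).

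The heart of the argument is \Cref{c3}, the bound on color switches. Fix a terminal $u$ in the new arborescence $T'$ with representative $v$ and follow out-arcs from $u$ to $V(P_v')$. By facts (a) and (b) this path splits cleanly into a first phase that stays outside $Z$ and uses only old arcs, followed by a second phase inside $Z$ that uses only arcs of $A_{\mc P}$. The first phase is a sub-path of the $u$-to-active-path walk that existed after iteration $i-1$, so by the inductive \Cref{c3} it has at most $2(i-1)$ color switches; entering $Z$ costs at most one more. Inside $Z$, fact (c) forces the path to run along a single monochromatic prefix $P_v'$ (which remains monochromatic even after its extension arc, being a prefix of $P_v$) and then cross at most once onto a monochromatic core path, for at most one further switch. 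Summing gives at most $2(i-1)+1+1 = 2i$ switches, closing the induction. I expect \Cref{c2} to be the main obstacle: carefully tracking how cut-throughs redistribute sub-arborescences, and verifying no component ends with zero or two active representatives, is the fiddliest bookkeeping, with the acyclicity part of \Cref{c1} close behind; both rest on the ``one-way trap'' facts (a)--(b) and on the protection of old active paths that comes for free from the vertex-disjointness of $\mc P$.
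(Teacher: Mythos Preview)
Your proposal is correct and follows essentially the same inductive approach as the paper: analyze how $B$ decomposes into the new arcs $A_{\mc P}$ and the surviving old arcs $B'$, use vertex-disjointness of the maximal prefixes together with independence of $S_H$ to control the structure of $A_{\mc P}$, and then trace a terminal's path through an ``old'' segment followed by at most two monochromatic pieces inside $\mc P$. The only notable stylistic difference is your proof of \Cref{c1}: you argue out-degree $\le 1$ plus acyclicity (using your ``one-way trap'' facts (a)--(b) to force any cycle entirely into $A_{\mc P}$, which is a forest by fact (c)), whereas the paper instead shows directly that every vertex has a dipath to an out-degree-$0$ node; both are equivalent characterizations of a branching and the latter dovetails more directly into the paper's proof of \Cref{c2}. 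One small remark: your aside that the $r$-containing arborescence is an ``exception'' to \Cref{c2} is unnecessary --- the vertex whose prefix reaches $r$ is removed from $H$ but \emph{not} from $S$, so it remains the active representative of that component.
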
 

\begin{proof}
We use induction to prove the above claims. Note that initially $B=\emptyset$, $S=R$, and each active path is $P_v^A = \{v\}$. Thus, $B$ induces a collection of isolated vertices and each terminal is its own representative and active path, satisfying conditions \ref{c1}, \ref{c2}, and \ref{c3} of the claim.
	
Let $B_i \subseteq A$, $S_i\subseteq R$, and $(P_{v}^A)_i$ for $v \in S_i$ denote the state of $B$, $S$, and the active paths $P_v^A$, respectively, at the end of iteration $i$. Assume that they satisfy conditions \ref{c1}, \ref{c2}, and \ref{c3}. We study the (weakly) connected components of $B_{i + 1} := (A_{\mc P})_{i +1} \cup B'_{i+1}$ created at the end of iteration $i+1$. First, $(A_{\mc P})_{i+1}$ consists of the paths $P'_v$ for $v \in S_i$ which are disjoint except for some which have been extended by a single arc to meet an unextended $P'_u$. Those that are unextended are precisely those that remain active, becoming $(P_u^A)_{i+1}$. Hence, each component of $(A_{\mc P})_{i+1}$ is either a singleton node in $V(G) \setminus S_i$, or an arborescence containing a single representative $v \in S_{i+1}$ and its active path $(P_v^A)_{i+1}$. We call the latter type of component an \emph{active component} of $(A_{\mc P})_{i+1}$. Observe that the final node of $(P_v^A)_{i+1}$ is the root of the arborescence represented by $v$, and hence has out-degree 0 in $(A_{\mc P})_{i+1}$.
	
To form $B_{i + 1}$, the algorithm removes any arc from $B_i$ leaving an active component of $(A_{\mc P})_{i+1}$ to create $B'_{i + 1}$, then adds the arcs from $B'_{i+1}$ to $(A_{\mc P})_{i+1}$. Since $(A_{\mc P})_{i+1}$ and $B_{i}$ (and thus also $B'_{i+1}$) have all out-degrees at most 1, then \Cref{step:update2} ensures the out-degrees of $B_{i+1}$ are also all at most 1. Therefore, to show that $B_{i + 1}$ induces a branching (condition \ref{c1}), it suffices to show that every component of $B_{i+1}$ has a vertex with out-degree 0. We show that each node $u \in V(G)$ has a path in $B_{i + 1}$ to an out-degree 0 node. 

First, suppose that $u$ lies in some active component of $(A_{\mc P})_{i + 1}$. The active component contains an active path $(P_v^A)_{i + 1}$ with a final node $f$. Since $u$ has a path to $f$ in the active component, the same path exists in $B_{i + 1}$. Moreover, $f$ has out-degree 0 in $(A_{\mc P})_{i+1}$, so by \Cref{step:update2} it still has out-degree 0 in $B_{i + 1}$. 

Second, suppose instead that $u$ is in a non-singleton component of $B_i$. From the inductive hypothesis, $u$ is contained in some arborescence $T'$ of $B_i$, and there is a path $L_u$ from $u$ to the active path $(P_v^A)_i$ of $T'$. We know that since $v \in S_{i}$, then in iteration $i+1$ we find a maximal disjoint prefix $P'_v$ of $P_v$, and $P'_v$ contains $(P_v^A)_i$. Now there are two cases (see \Cref{fig:cases}): 
\begin{enumerate}[I.]
    \item If $L_u$ has no arc leaving some vertex from any path in $\mc{P}$ (i.e., \Cref{step:update2} doesn't remove arcs from $L_u$), then all of $L_u$ is in $B'_{i+1}$, and therefore there is a path from $u$ to $P'_v$ in $B_{i+1}$.
    
    \item Otherwise, $L_u$ has an arc leaving some paths $P'_w \in \mc{P}$. Consider the earliest node $u'$ along $L_u$ at which this occurs. Then the prefix $u \to u'$ of $L_u$ is contained in $B'_{i+1}$, and $u'$ is in some $P'_w$. \label{case2-pathcross}
\end{enumerate}

\begin{figure}[h]
\begin{tabular}{c@{\hskip 2cm}c}
  \includegraphics[scale=0.7]{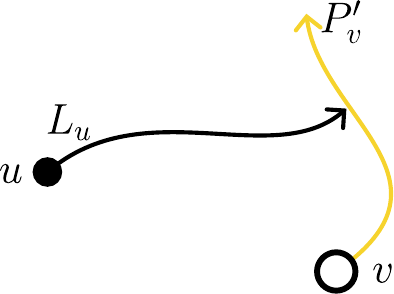} &   \includegraphics[scale=0.7]{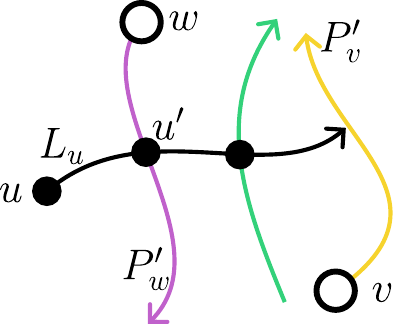} \\
Case I. & Case II.
\end{tabular}
\caption{An illustration of the two cases from the proof of \Cref{claim}.}
\label{fig:cases}
\end{figure}

In either case, $B_{i+1}$ contains a path from $u$ to some path in $\mc{P}$, and every such path is in an active component of $(A_{\mc{P}})_{i + 1}$. We already argued above that all such components contain an active path $P_v^A$, and the final node of that active path has out-degree 0 in $B_{i + 1}$. 

Finally, the only other possibility is that $u$ is neither in an active component of $(A_{\mc P})_{i + 1}$ nor a non-singleton component of $B_i$. Not being in an active component implies that $u$ is in a singleton component of $(A_{\mc P})_{i + 1}$, and hence $u$ itself has out-degree 0 in $B_{i + 1}$. This completes the proof of condition \ref{c1}. 

Condition \ref{c2} immediately follows: we showed that if $u$ lies in either an active component of $(A_{\mc P})_{i + 1}$ or a non-singleton component of $B_i$, then the arborescence in $B_{i + 1}$ containing $u$ has an active path $P_v^A$. No such component can have two representatives $v, v'$, or else it would contain two active paths and have two out-degree 0 nodes. This cannot happen in an arborescence. Otherwise, $u$ is in neither an active component of $(A_{\mc P})_{i + 1}$ nor a non-singleton component of $B_i$. But then $u \in V(G) \setminus S_i$, and hence $u$ must be a singleton non-terminal in $B_{i + 1}$ (or else we would contradict the induction hypothesis for $B_i$). 

To prove condition \ref{c3}, consider any terminal $u \in R$. Let its arborescence in $B_{i+1}$ be $T'$ with representative $v$. If $u$ is already in some active component of $(A_{\mc P})_{i + 1}$, then it switches colors at most once to get to its active path. Otherwise, it was in a non-singleton component of $B_i$. Observe that the path $L_u$ from the argument above is in $B_i$, so by induction switches colors at most $2i$ times. In Case \ref{case2-pathcross}, we may switch color an additional time at node $u'$. In either case, $L_u$ goes to some $P'_w \in \mc{P}$. If $P'_w$ is not the active path $(P^A_v)_{i+1}$ of $T'$, then it was extended by one arc in \Cref{step:ext3}, so it may switch color one further time to reach $(P^A_v)_{i+1}$. In total, the path $u \to (P^A_v)_{i+1}$ has at most $2i + 2$ color switches. 

Finally, condition \ref{c4} follows because the dependency graph $H$ is 3-colorable. By construction, $\mc{P}$ is a collection of paths where every pair is vertex-disjoint. When constructing $H$, for any vertex $v\in S_i$ where $P_v'\neq P_v$, the next vertex along $P_v$ hits at most one other path $P_u'$. We imagine directing the resulting edge $(v, u)$ in $H$ from $v$ to $u$, so each vertex in $H$ has out-degree at most 1. Therefore, any connected component $C$ of $H$ has at most $\abs{V(C)}$ edges. These graphs are 3-colorable by \Cref{lem:3-color}.

A largest color class $S_H$ in the 3-coloring is chosen to become inactive. $H$ has vertices $S_i$, possibly minus one vertex $v$ whose maximal disjoint prefix $P_v'$ reaches the root (note there is at most one such vertex by the vertex disjointness of the maximal disjoint prefixes). 
\[
    \abs{S_{i + 1}} \leq \abs{S_i} - \ceil{\frac{\abs{S_i} - 1}{3}} \leq \frac{3}{4} \cdot \abs{S_i}. \qedhere
\]
\end{proof}

Condition \ref{c4} implies that the algorithm runs for at most $\log_{\nicefrac{4}{3}} k$ iterations. This along with conditions \ref{c1}, \ref{c2}, and \ref{c3} together imply that $B$ connects all terminals to $r$ after this many iterations, and at the end it forms a $2\log_{\nicefrac{4}{3}} k$-switching arborescence for $G$. 

To complete the proof, we observe that \Cref{alg:path-agg} runs in polynomial time. This follows since the total number of iterations is $O(\log k) \leq O(\log n$), and each individual step clearly runs in polynomial time (the 3-coloring of \Cref{step:ext2} is polynomial time by \Cref{lem:3-color}).
\end{proof}

\bibliographystyle{alpha}\bibliography{references}

\begin{thebibliography}{CHKS10}

\bibitem[CHKS10]{CHKS10}
Chandra Chekuri, Mohammad~Taghi Hajiaghayi, Guy Kortsarz, and Mohammad~R. Salavatipour.
\newblock Approximation algorithms for nonuniform buy-at-bulk network design.
\newblock {\em {SIAM} J. Comput.}, 39(5):1772--1798, 2010.

\bibitem[GK11]{GUPTA11}
Anupam Gupta and Jochen Könemann.
\newblock Approximation algorithms for network design: A survey.
\newblock {\em Surveys in Operations Research and Management Science}, 16(1):3--20, 2011.

\bibitem[HHW21]{haeupler2021near}
Bernhard Haeupler, D~Ellis Hershkowitz, and David Wajc.
\newblock Near-optimal schedules for simultaneous multicasts.
\newblock In {\em International Colloquium on Automata, Languages and Programming (ICALP)}, volume 198, page~78. Schloss Dagstuhl-Leibniz-Zentrum f{\"u}r Informatik, 2021.

\bibitem[NUW19]{NaorUW19}
Joseph~(Seffi) Naor, Seeun~William Umboh, and David~P. Williamson.
\newblock Tight bounds for online weighted tree augmentation.
\newblock In Christel Baier, Ioannis Chatzigiannakis, Paola Flocchini, and Stefano Leonardi, editors, {\em International Colloquium on Automata, Languages and Programming (ICALP)}, volume 132 of {\em LIPIcs}, pages 88:1--88:14. Schloss Dagstuhl - Leibniz-Zentrum f{\"{u}}r Informatik, 2019.

\bibitem[Rav06]{Ravi06}
R.~Ravi.
\newblock Matching based augmentations for approximating connectivity problems.
\newblock In Jos{\'{e}}~R. Correa, Alejandro Hevia, and Marcos~A. Kiwi, editors, {\em Latin American Theoretical Informatics Symposium (LATIN)}, volume 3887 of {\em Lecture Notes in Computer Science}, pages 13--24. Springer, 2006.

\bibitem[ST81]{SleatorT81}
Daniel~Dominic Sleator and Robert~Endre Tarjan.
\newblock A data structure for dynamic trees.
\newblock In {\em Annual ACM Symposium on Theory of Computing (STOC)}, pages 114--122. {ACM}, 1981.

\end{thebibliography}

\end{document}